\numberwithin{equation}{section}
\newcommand{\C}{{\mathbb{C}}}
\newcommand{\HH}{{\mathbb{H}}}
\newcommand{\R}{{\mathbb{R}}}
\newcommand{\T}{{\mathbb{T}}}
\newcommand{\Z}{{\mathbb{Z}}}
\newcommand{\Aa}{{\mathcal{A}}}
\newcommand{\Cc}{{\mathcal{C}}}
\newcommand{\Gg}{{\mathcal{G}}}
\newcommand{\Hh}{{\mathcal{H}}}
\newcommand{\Ll}{{\mathcal{L}}}
\newcommand{\Mm}{{\mathcal{M}}}
\newcommand{\swann}{\mathcal{U}(N)}
\newcommand{\euler}{\mathcal{X}_{0}}
\newcommand{\imag}{\mathrm{\mathbf{i}}}
\newcommand{\ms}[1]{\mathsf{#1}}
\newcommand{\mf}[1]{\mathfrak{#1}}
\newcommand{\mbb}[1]{\mathbb{#1}}
\newcommand{\abs}[1]{\left\lvert #1 \right\rvert}
\newcommand{\pair}[1]{\left\langle #1 \right\rangle}
\newcommand{\set}[1]{\left\lbrace #1 \right\rbrace}
\newcommand{\eqst}[1]{\begin{equation*} #1 
                      \end{equation*}}
\newcommand{\eq}[1]{\begin{equation} #1
                    \end{equation}}
\newcommand{\alst}[1]{\begin{align*} #1
                      \end{align*}}
\newcommand{\al}[1]{\begin{align} #1
                      \end{align}}
\theoremstyle{plain}
\newtheorem{thm}{Theorem}[section]
\newtheorem{lem}[thm]{Lemma}
\newtheorem*{lemma*}{Lemma}
\theoremstyle{definition}
\theoremstyle{definition}
\newtheorem{defn}{Definition}
\newtheorem{ex}{Example}
\theoremstyle{remark}
\newtheorem*{question*}{Question}
\DeclareMathOperator{\End}{End}
\DeclareMathOperator{\Map}{Map}
\DeclareMathOperator{\Hom}{Hom}
\DeclareMathOperator{\grad}{grad}
\begin{document}

\title[Generalised monopole equations on K\"ahler surfaces]{Generalised monopole equations on K\"ahler surfaces}

\author[I. Biswas]{Indranil Biswas}

\address{School of Mathematics, Tata Institute of Fundamental
Research, Homi Bhabha Road, Mumbai 400005, India}
\email{indranil@math.tifr.res.in}

\author[V. Thakre]{Varun Thakre}

\address{International Centre for Theoretical Sciences (ICTS-TIFR), Hesaraghatta, Hobli, Bengaluru 560089, India}

\email{varun.thakre@icts.res.in}

\subjclass[2010]{Primary 53C26,	58D27}

\date{Revised on \today }

\keywords{Spinor, four-manifold, hyperK\"ahler manifolds, generalised Seiberg-Witten equations, Hitchin-Kobayashi correspondence}

\begin{abstract}

In this article, we establish a Hitchin-Kobayashi type correspondence for generalised Seiberg-Witten monopole equations on K\"ahler surfaces. We show that the ``stability'' criterion we obtain, for the existence of solutions, coincides with that of the usual Seiberg-Witten monopole equations. This enables us to construct a map from the moduli space of solutions to the generalised equations to effective divisors.
\end{abstract}

\maketitle

\section{Introduction}

In this article, we study a generalisation of the Seiberg-Witten (SW) monopole equations on a K\"ahler surface. Let $(X, g_X)$ be a smooth, oriented, four-dimensional Riemannian manifold. Fix a ${\rm Spin}^c$-structure $Q \longrightarrow X$. Spinor bundles are vector bundles associated to $Q$, with respect to a certain standard action on the vector space of quaternions $\HH$. The idea behind the generalisation is to replace the spinor representation $\HH$ with a hyperK\"ahler manifold $(M, g_{M}, I_1, I_2, I_3)$ admitting certain symmetries. \emph{Generalised spinors} are the sections of the associated fiber bundle. It is then possible to construct a non-linear Dirac operator, acting on the sections of the fiber-bundle. The operator is a first order, non-linear elliptic operator. This is the essence of the generalisation of Seiberg-Witten (GSW) monopole equations. An appropriate replacement of the quadratic map, which maps spinors to self-dual 2-forms on $X$, gives the GSW monopole equations. The generalisation was first introduced by C. Taubes \cite{taubes} in three dimensions. It was extended to four dimensions by V. Pidstrygach \cite{victor}. However, such generalisations of the Dirac operator were already known to physicists and have been used in the study of gauged $\sigma$-sigma models \cite{anselmi-fre95}, \cite{bagger-witten83}.

On a K\"ahler surface, the generalised monopole equations were studied by R. Waldm\"uller \cite{waldmuller03} and K. Strokorb \cite{strokorb09} in their Diploma thesis and by A. Haydys \cite{andriy} in his Ph.D 
thesis. The equations reduce to a system of \emph{twisted}, symplectic vortex 
equations (see \cite{andriy}, Sec. 4.2). The latter are a system of vortex-like equations with values in a symplectic manifold $(F, \omega)$ and can be defined over any compact K\"ahler manifold. The equations were discovered independently by I. Mundet i Riera \cite{riera99} and K. Cieliebak, A.R. Gaio
and D. Salamon \cite{ciel-giao-salamon}. I. Mundet i Riera obtained a Hitchin-Kobayashi-type correspondence when $F$ is K\"ahler. The correspondence relates the spaces of solutions up to real and complex gauge transformation and coincides with the notion of ``stability'' which arises in the construction of algebraic moduli space, by using Geometric Invariant Theory (GIT).

The aim of this article is to explicitly evaluate the stability condition for the 
(Abelian) GSW equations on a K\"ahler surface, for a large class of hyperK\"ahler manifolds, admitting a hyperK\"ahler potential. We show 
that the condition can be reduced to the existence and the uniqueness of 
solutions to Kazdan-Warner equation. This, however, coincides with condition for the existence of solutions to the usual SW monopole equations.

Given this, it is tempting to ask if \emph{there exists a map between 
moduli space of gauge-equivalent solutions to GSW and effective divisors on $X$?} 
Section \ref{sec: hk correspondance} provides an affirmative answer to this
question.

\section{HyperK\"ahler manifolds}
\label{preliminaries}

Let $(M, g_{M}, I_1, I_2, I_3)$ be a $4n$-dimensional hyperK\"ahler manifold. Let ${\rm Sp}(1)$ denote the group of unit quaternions and $\mf{sp}(1)$ its Lie algebra. As a matter of convenience, we think of the complex structures as covariantly constant endomorphisms of $TM$ with values in $\mf{sp}(1)^{\ast} = \left(\mf{Im}(\HH)\right)^{\ast}$
 \eq{
 \label{eq:algebra homomorphism}
 I \in \Gamma(M, \End(TM) \otimes \mf{sp}(1)^{\ast}), ~~~ I_\xi := \xi_1 I_1 + \xi_2 I_2 + \xi_3 I_3, ~~ \xi \in \mf{sp}(1).
 }
 
\noindent It is easy to see that $M$ has an entire family of K\"ahler structures
parametrized by $S^2 \subset \mf{Im}(\HH)$. Let $\omega_i, ~ i=1,2,3$, denote the K\"ahler 2-forms associated to $I_1, I_2, I_3$. Combining the three K\"ahler 2-forms, we define a single, $\mf{sp}(1)$-valued 2-form
\eqst{
\omega := i\omega_1 + j\omega_2 + k\omega_3.
}

Suppose that a Lie group $G$ acts isometrically on $M$. Let $\mf{g}$ denote its Lie algebra. We will denote by 
\eqst{
K^M: \mf{g} \longrightarrow \Gamma (M, TM), ~~~ \gamma \longmapsto K^M_{\gamma}
} 
the Killing vector field on $M$ due to $\gamma$.

\begin{defn}
\label{def: permuting action}
An isometric (left) action of ${\rm Sp}(1)$ on $M$ is said to be \emph{permuting} if
\eqst{
(L_q)^{\ast} \omega \,=\, \overline{q} \, \omega \, q\, , \ \ q \,\in\, {\rm Sp}(1)\, .
}
\end{defn} 

\noindent In other words, the induced action of ${\rm Sp}(1)$, on the two-sphere of complex 
structures, is the standard action of ${\rm Sp}(1)$ on $S^2$.

\begin{defn}
\label{def: tri-holomorphic action}
An isometric action of a Lie group $G$ on $M$ 
is \emph{tri-holomorphic} (or \emph{hyperK\"ahler}), if it fixes the 2-sphere of complex structures
\eqst{
\Ll_{K^M_{\eta}} \, \omega = 0, ~~~ \eta \in \mf{g}.
}

\noindent If, in addition, the $G$ action is Hamiltonian with respect to each $\omega_i$, then
the action is said to be \emph{tri-Hamiltonian} (or hyperHamiltonian). We can define a
single $G$-equivariant \emph{hyperK\"ahler moment map}
\eq{
\mu\,:\, M \,\longrightarrow\, \mf{sp}(1)^{\ast}\otimes {\rm Lie}(G)^{\ast}\,=\,
\mf{sp}(1)^{\ast}\otimes \mf{g}^{\ast}\, ,
}
by combining the three moment maps into one
\eqst{
\mu \,=\, i\mu_1 + j\mu_2 + k\mu_3\, .
}
\end{defn}

Amongst the class of hyperK\"ahler manifolds, which admit a permuting ${\rm Sp}(1)$-action, there
are those that also admit a \emph{hyperK\"ahler potential}; i.e, a smooth map
$\rho_0: M \longrightarrow \R^{+}$, which is simultaneously a K\"ahler potential for
each $\omega_i$. Swann \cite{swann} shows that for such hyperK\"ahler manifolds,
the permuting ${\rm Sp}(1)$ action can be extended to a homothetic action of $\HH^{\ast}
\,=\,{\rm Sp}(1) \times \R^{+}$ and the gradient vector field $\euler = \grad(\rho_0) = -I_{\xi}K^M_{\xi}$ is independent of $\xi \in \mf{sp}(1)$.
Moreover, $\euler$ generates the \emph{homothetic} action of $\R^{+} \subset \HH^{\ast}$: 
\eqst{
(L_{r})^{\ast}g_{M}(\cdot, \cdot) \,=\, r^2g_{M}(\cdot, \cdot)\, .
}

\begin{defn}
A quaternionic-K\"ahler manifold is a $4n$-dimensional Riemannian manifold
whose holonomy is contained in ${\rm Sp}(n){\rm Sp}(1)\,:=\, ({\rm Sp}(n)\times {\rm Sp}(1))/\pm 1$.
\end{defn}

\begin{thm}[{\cite{swann}}]\label{thm: swanns theorem}
Let $M$ be a hyperK\"ahler manifold admitting a hyperK\"ahler potential $\rho_0$. Then
$\rho_0^{-1}(c)/{\rm Sp}(1)\,:=\, N$ is a quaternionic-K\"ahler manifold
of positive scalar curvature.
\end{thm}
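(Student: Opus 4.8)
The plan is to exhibit the Swann bundle construction in reverse: starting from a hyperK\"ahler manifold $M$ with hyperK\"ahler potential $\rho_0$ and extended homothetic $\HH^{\ast}$-action, I would produce the quaternionic-K\"ahler metric on the quotient $N = \rho_0^{-1}(c)/{\rm Sp}(1)$ and verify its holonomy lies in ${\rm Sp}(n){\rm Sp}(1)$. First I would record the structural facts already stated: $\euler = \grad(\rho_0)$ generates the $\R^{+}$-action, $\euler = -I_\xi K^M_\xi$ for every $\xi \in \mf{sp}(1)$, and the permuting ${\rm Sp}(1)$-action rotates the $2$-sphere of complex structures. From $(L_r)^{\ast} g_M = r^2 g_M$ one gets that $\euler$ has $\norm{\euler}^2 = 2\rho_0$ up to normalisation and that $\mathcal{L}_{\euler} g_M = 2 g_M$; combined with the permuting property, the four vector fields $\euler, I_1\euler = K^M_{i}, I_2\euler = K^M_{j}, I_3\euler = K^M_{k}$ span an integrable rank-$4$ distribution $\Vv$ tangent to the $\HH^{\ast}$-orbits, and $TM = \Vv \oplus \Hh$ orthogonally, where $\Hh$ is the horizontal distribution for the $\HH^{\ast}$-action.

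Next I would pass to the level set $\rho_0^{-1}(c)$, a smooth hypersurface (since $\euler$ is transverse to it), on which the remaining ${\rm Sp}(1)$-action is locally free, and form $N = \rho_0^{-1}(c)/{\rm Sp}(1)$, a $4(n-1)$-manifold if $\dim M = 4n$ — more precisely one checks $\dim N = 4n - 4$, matching a $4m$-dimensional quaternionic-K\"ahler manifold with $m = n-1$. The tautological identification $\Hh \cong \pi^{\ast}TN$ (for $\pi: \rho_0^{-1}(c) \to N$) lets me push the restriction of the hyperK\"ahler structure down. The key point is that although the individual complex structures $I_1, I_2, I_3$ do \emph{not} descend (the permuting action rotates them), the rank-$3$ subbundle of $\End(TN)$ they span \emph{does} descend, because ${\rm Sp}(1)$ acts on this $\mf{sp}(1)^{\ast}$-worth of endomorphisms by the adjoint/rotation action that is quotiented out; this rank-$3$ parallel subbundle $Q_N \subset \End(TN)$ with the quaternion relations is exactly the datum of an almost quaternionic structure, and one must check it is preserved by the Levi-Civita connection of the quotient metric $g_N$, which by Berger's holonomy classification is equivalent to holonomy $\subset {\rm Sp}(m){\rm Sp}(1)$ (for $m \geq 2$; the cases $m = 1$, i.e. self-dual Einstein, and $m=0$ are handled separately by convention).

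The main computation — and the step I expect to be the principal obstacle — is the O'Neill-type submersion formula showing that the Levi-Civita connection of $g_N$ preserves $Q_N$ and that the metric is Einstein with the correct (positive) scalar curvature. Concretely: differentiating $\euler = -I_\xi K^M_\xi$ and using $\nabla I_i = 0$ on $M$, one expresses $\nabla_X I_i$ restricted to $\Hh$ in terms of $X$, the moment-map-type data, and the other $I_j$'s with coefficients given by $1$-forms dual to $I_j \euler$; projecting to $N$ kills the $\euler$-direction but leaves precisely the ``connection $1$-forms'' of the rank-$3$ bundle $Q_N$, so $\nabla^N$ does preserve $Q_N$. Positivity of the scalar curvature then follows from the homothety: the cone-like behaviour of $g_M$ near the zero/apex of $\rho_0$ forces the transverse geometry to be positively curved, and one can pin down the Einstein constant by comparing with the flat model $M = \HH^n$, $N = \HH P^{n-1}$. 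For $4m \le 4$ I would invoke the standard conventions (a self-dual Einstein $4$-manifold of positive scalar curvature, respectively a point). I would then cite \cite{swann} for the detailed verification and note that the construction is the inverse of the Swann bundle $\Uu(N) \to N$.
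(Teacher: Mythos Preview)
The paper does not prove this theorem at all: it is stated with the citation \cite{swann} and immediately followed by a discussion of the converse construction (building $\swann$ from $N$). There is therefore no ``paper's own proof'' to compare your proposal against.

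For what it is worth, your outline is a reasonable sketch of the argument in Swann's original paper: the decomposition $TM = \Vv \oplus \Hh$ along $\HH^{\ast}$-orbits, the observation that the rank-$3$ bundle spanned by $I_1, I_2, I_3$ descends to $N$ precisely because the permuting ${\rm Sp}(1)$-action rotates the individual complex structures, and the O'Neill-type verification that the quotient Levi-Civita connection preserves this bundle. The step you flag as the principal obstacle (the submersion computation yielding $\nabla^N Q_N \subset Q_N$ and the Einstein constant) is indeed where the real work lies, and your sketch does not carry it out --- but since the present paper makes no attempt either and simply invokes \cite{swann}, that is not a gap relative to the paper.
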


On the other hand, starting with a quaternionic-K\"ahler manifold $N$ of positive scalar curvature, Swann's construction produces a hyperK\"ahler manifold $\swann$ with a permuting ${\rm Sp}(1)$-action and a hyperK\"ahler potential. This is the total space of the fiber bundle $\swann \,\longrightarrow\, N$ with a typical fiber $\HH^{\ast}/(\Z/2\Z)$. Moreover, Swann shows that any action of a Lie group $G$ on $N$ which preserves the quaternionic-K\"ahler structure, can be lifted to a tri-Hamiltonian action of $G$ on $\swann$. In this case, the moment map has a simple expression (see Sec. 3.3 of \cite{henrik}):
\eq{
\label{eq: moment map on swann bundle}
\pair{\mu, \xi\otimes\eta} = -\frac{1}{2}g_M(K^M_{\xi}, K^M_{\eta}), ~~ \xi \in \mf{sp}(1)~ \text{and}~ \eta \in \mf{g}.
}

Examples for compact, quaternionic-K\"ahler manifolds, with positive scalar curvature are given by \emph{Wolf spaces}. These are compact, homogeneous, quaternionic-K\"ahler manifolds classified by Wolf \cite{wolf65} and Alekseevskii \cite{alexeevski68}. The list includes quaternionic projective spaces $\HH P^n = \frac{{\rm Sp}(n+1)}{{\rm Sp}(n)
\times {\rm Sp}(1)}$, complex Grassmannians $X^n \,= \,\frac{{\rm SU}(n)}{{\rm S}({\rm U}(n-
2) \times {\rm U}(2))}$, real Grassmannians $Y^n \,=\, \frac{{\rm SO}(n)}{{\rm SO}(n-4)
\times {\rm SO}(4))}$, etc. The associated manifolds $\swann$ are certain co-adjoint
orbits of complex simple Lie groups (see \cite{swann}).

\subsection{Target hyperK\"ahler manifold}

Let $(M, g_{M}, I_1, I_2, I_3)$ be a $4n$-dimensional hyperK\"ahler manifold. Suppose
that there is an isometric action of ${\rm U}(1)$ on $M$, that preserves $\omega_1$ and rotates $\omega_2$ and $\omega_3$; in other words, if $X$ is the Killing vector field
on $M$ that generates the action, then
\eqst{
\Ll_X \omega_1 = 0, ~~ \Ll_X \omega_2 = -\omega_3, ~~ \Ll_X\omega_3 = \omega_2.
}

Such an action is called a \emph{rotating action} of ${\rm U}(1)$. This notion was  introduced by N. Hitchin, A. Karlhede, U. Lindstr\"om and M. Ro\v{c}ek \cite{hitchin87} . Henceforth, we will refer to such a hyperK\"ahler manifold 
as a \emph{target hyperK\"ahler manifold}.

\begin{ex}
\label{ex: rotating S1 action}
Consider the flat quaternionic space $\HH^n$. If we write $\HH^n = \C^n \oplus j\C^n$, we have
\eqst{
\omega_1 \,=\, \frac{\imag}{2}\sum^n_{l=1} dz_l \wedge d\overline{z_l} + dw_l \wedge
d\overline{w_l}, ~~ \omega_c \,:=\, \omega_2 + i\omega_3
\,=\, \sum^n_{l=1} dz_l \wedge dw_l\, .
}
Then the circle action $\big(e^{i\theta}, \,(z,w)\big) \,\longmapsto\, (z,\, e^{i\theta}\cdot w)$ is a rotating action, fixing $\omega_1$.
\end{ex}

\begin{ex}
\label{ex: rotating S1 action swann bundles}

Given $\swann$ for some $N$ of positive scalar curvature, the stabilizer ${\rm U}(1)_r
\,\subset\, {\rm Sp}(1)$ of $I_1$ gives the requisite rotating action.
\end{ex}

\section{Generalised Seiberg-Witten equations on K\"ahler surface}

Fix a target hyperK\"ahler manifold $M$ and assume that there is a tri-Hamiltonian 
action of ${\rm U}(1)$ on $M$ that commutes with the rotating ${\rm U}(1)_r$-action. To 
distinguish this group action from the rotating one, we denote this group by 
${\rm U}(1)_0$. Therefore, $M$ has a \emph{rotating action} of $\T^2 \,=\, {\rm U}(1)_r
\times_{\Z/2\Z} {\rm U}(1)_0$.

Let $X$ be a K\"ahler surface, and let $\omega_X$ be the K\"ahler 2-form. The 
K\"ahler structure on $X$ determines the reduction of its ${\rm SO}(4)$-frame bundle 
to a principal ${\rm U}(2)$-bundle $P_{{\rm U}(2)}$. More precisely, ${\rm U}(2) \,=\,
({\rm U}(1)_r \times {\rm Sp}(1)_-)/(\Z/2\Z)$, where ${\rm U}(1)_r \,\subset \,{\rm 
Sp}(1)_+$ is the stabilizer of the complex structure $R_{\overline{i}}$ in the ${\rm 
SO}(4) \,=\, ({\rm Sp}(1)_+\times {\rm Sp}(1)_-)/(\Z/2\Z)$-representation $\R^4 
\,\cong\, \HH$. The ${\rm U}(1)$-bundle $P_r \,:=\, P_{{\rm U}(2)}/{\rm Sp}(1)_-$ is 
precisely the one associated to the anti-canonical line bundle of $X$. Fix an 
auxiliary principal ${\rm U}(1)$-bundle $P_0$ over $X$ and define the $\T^2$-bundle 
$P_{\mbb{T}^2}\,:=\, (P_{r} \times_X P_0)/(\Z/2\Z)$.

\emph{Generalised spinors} are $\T^2$-equivariant maps 
\eqst{
\Map(P_{\T^2}, M)^{\T^2} \cong \Gamma(X, \ms{M}), ~~~~~ \text{where} ~~~~~ \ms{M} = P_{\T^2} \times_{\scriptscriptstyle \T^2} M.
} 

The Levi-Civita connection on $X$ defines a connection on 
$P_{{\rm U}(2)}$. Therefore, a connection $A$ on $P_0$ and the Levi-Civita connection 
together determine a unique connection $\ms{A}$ on $P_{\T^2}$. A spinor $u\,:\, P_{\T^2} 
\,\longrightarrow \,M$ and the connection $\ms{A}$ determine a $\T^2$-equivariant map
$K^M_{\ms{A}}|_u\,:\, TP_{\T^2} \,\longrightarrow\, TM$ as follows. For any $v \,\in\, T_p P_{\T^2}$,
take $$K^M_{\ms{A}}(v)|_u \,:=\, K^M_{\ms{A}(v)}|_{u(p)} \,\in\, T_{u(p)}M$$
(note that $\ms{A}(v)\, \in\, \mf{t}^2 :=\text{Lie}(\mbb{T}^2)$).

The differential of $u$ is also $\T^2$-equivariant. We define the covariant derivative of $u \in \Map(P_{\T^2}, M)^{\T^2}$ with respect to $\ms{A}$ to be the one-form $D_{\ms{A}}u \in \Omega^1(P_{\T^2}, u^{\ast}TM)^{\T^2}$
\eqst{
D_{\ms{A}}u \,=\, du + K^M_{\ms{A}}|_u\, .
}
This is an equivariant, horizontal one-form on $P_{\T^2}$. Indeed, for any $\xi \in \mf{t}^2$,
we have
\eqst{
D_{\ms{A}}u \left(K^{P_{\T^2}}_{\xi} \right) = du\left(K^{P_{\T^2}}_{\xi} \right)\, + \, K^M_{\ms{A}\left(K^{P_{\T^2}}_{\xi} \right)}|_u = - K^M_{\xi}|_u \, + \, K^M_{\xi}|_u = 0.
}
Therefore, $D_{\ms{A}}u$ descends to a one-form on $X$ with values in $(u^{\ast}TM)/\T^2$. Denote by
$\overline{\partial}_{\ms{A}}u$, the $(0,1)$-part of this 1-form, meaning
\eqst{
\overline{\partial}_{\ms{A}}u \,=\, \frac{1}{2} \left(D_{\ms{A}}u \, - \, I_1 \circ D_{\ms{A}}u \circ \widetilde{I}_X \right),
}
where $\widetilde{I}_X$ is the lift of the complex structure $I_X$ to the horizontal subspace $\Hh_{\ms{A}} \subset TP_{\T^2}$.

\noindent Note that in defining the $\overline{\partial}_{\ms{A}}$ operator, we treat $M$ as a K\"ahler manifold with respect to complex structure $-I_1$.

Denote by $\Aa(P_0)$ the space of connections on $P_0$. Define the \emph{configuration space }
\eq{
\label{eq: configuration space}
\Cc \,:= \, \Map\,(P_{\T^2}, \, M)^{\T^2} \times \Aa(P_0).
}
Let $\Gg \, := \, \Map \, (X, {\rm U}(1))$ be the infinite-dimensional \emph{gauge group}. Then, the configuration space carries a (right) action of $\Gg$.

\begin{thm}[{\cite{andriy}}]
\label{them: gen sw on kahler surface}
Let $(X,\,\omega_X)$ be a K\"ahler surface. Then for a pair $(u, \,A) \,\in\, \Cc$, the
perturbed, GSW equations on $X$ reduce to the following system:
\eq{\label{eq: gen sw on kahler surface}
\left\{
    \begin{array}{lcl}
     \overline{\partial}_{\ms{A}}u = 0 \\      
     \Lambda_{\omega_X} F_{A} \, + \, \imag\mu_1 \circ u \, + \, \imag t = 0, ~~ t\in R \\
      \mu_c \circ u = 0, ~~~ F^{0,2}_{A} = 0
    \end{array}
  \right.}
where $F_{A}$ is the curvature of $A$, and $\mu_c$ is the complex moment map
$\mu_2 + \imag \mu_3$. Moreover, these equations are invariant under the action of $\Gg$.
\end{thm}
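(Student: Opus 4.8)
The plan is to carry out the \emph{Kähler dimensional reduction} of the four-dimensional GSW system, adapting to the non-linear hyperkähler target the argument Witten used for the classical equations. First I would recall from \cite{victor} and \cite{andriy} the precise shape of the GSW equations on the four-manifold $X$: the non-linear Dirac equation $\mf{D}_{\ms A}u = 0$ together with the curvature equation $F_A^{+} = \mf{m}\circ u$, where $\mf{m}\colon\ms{M}\longrightarrow \Lambda^{2}_{+}T^{\ast}X\otimes\imag\R$ is assembled from the hyperk\"ahler moment map of the ${\rm U}(1)_0$-action via the canonical isometry $\Lambda^{2}_{+}T^{\ast}X\cong\mf{sp}(1)^{\ast}$ furnished by the metric; the perturbation amounts to adding a fixed self-dual two-form, which on a K\"ahler surface it is natural to take proportional to $\omega_X$, producing the real constant $t$.

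The first step is the splitting forced by the K\"ahler structure. On $X$ one has $\Lambda^{2}_{+}T^{\ast}X = \R\,\omega_X\oplus\big(\Lambda^{2,0}\oplus\Lambda^{0,2}\big)T^{\ast}X$, and under $\Lambda^{2}_{+}T^{\ast}X\cong\mf{sp}(1)^{\ast}$ the line $\R\,\omega_X$ corresponds to the $I_1$-direction while $\Lambda^{2,0}\oplus\Lambda^{0,2}$ corresponds to the plane spanned by the remaining two complex structures. Consequently the curvature equation decouples into its $\omega_X$-component, which after absorbing the normalising constants into $t$ reads $\Lambda_{\omega_X}F_A + \imag\mu_1\circ u + \imag t = 0$, and its $(2,0)\oplus(0,2)$-component, which relates $F_A^{2,0}+F^{0,2}_A$ to $\mu_c\circ u$. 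For the Dirac equation I would use that the rotating $\T^{2}$-action preserves $I_1$ --- this is exactly the point of the \emph{target} hyperk\"ahler hypothesis --- so that $\ms{M}$ carries a fibrewise $-I_1$, and, combining it with $I_X$, the horizontal derivative $D_{\ms A}u$ splits into a $(1,0)$- and a $(0,1)$-part, the latter being $\overline{\partial}_{\ms A}u$. Writing out the $\T^{2}$-equivariant non-linear Clifford multiplication on a K\"ahler surface then exhibits $\mf{D}_{\ms A}u$ as the sum of $\overline{\partial}_{\ms A}u$ and a term built from the complementary component of $D_{\ms A}u$ and the complex moment map; in the flat model $M=\HH^{n}$ these are exactly Witten's $\overline{\partial}_{A}\alpha$ and $\overline{\partial}_{A}^{\ast}\beta$.

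The heart of the proof, and the step I expect to be the main obstacle, is to show that $\mf{D}_{\ms A}u = 0$ is equivalent to $\overline{\partial}_{\ms A}u = 0$ together with $\mu_c\circ u = 0$, from which $F^{0,2}_A = 0$ follows by feeding $\mu_c\circ u = 0$ back into the $(2,0)\oplus(0,2)$-curvature equation. Beyond the purely algebraic decomposition of the non-linear Clifford multiplication, this is likely to require an integration-by-parts argument over the compact K\"ahler surface $X$ --- pairing the Dirac equation with a suitable quantity, invoking the K\"ahler identities and a Bochner--Weitzenb\"ock formula for $\overline{\partial}_{\ms A}$, and substituting the $\omega_X$-component of the curvature equation --- in order to kill the ``second-component'' term, in the spirit of Witten's argument that in the classical case one of $\alpha,\beta$ vanishes identically. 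The delicate points are tracking the normalising constants and ensuring this analytic argument survives the non-compactness of $M$; here the hyperk\"ahler potential $\rho_0$ and the homothety it generates (Theorem~\ref{thm: swanns theorem}), together with the explicit moment map \eqref{eq: moment map on swann bundle} in the Swann-bundle case, provide the a priori control and the identification of $\mu_c$. The converse implication is routine: given a solution of the displayed system one reassembles the four equations into a solution of the perturbed GSW system.

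Finally, gauge invariance is immediate: $g\in\Gg = \Map(X,{\rm U}(1))$ acts on $\Aa(P_0)$ by $A\longmapsto A - g^{-1}dg$ and on spinors through the ${\rm U}(1)_0$-action on $M$; since ${\rm U}(1)_0$ is abelian and tri-Hamiltonian each $\mu_i$ is ${\rm U}(1)_0$-invariant, so $\mu_1\circ u$ and $\mu_c\circ u$ are unchanged, while $\overline{\partial}_{\ms A}u$, $\Lambda_{\omega_X}F_A$ and $F^{0,2}_A$ transform precisely as required for each of the four equations to be preserved.
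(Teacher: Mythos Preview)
The paper does not contain its own proof of this theorem: it is stated with attribution to Haydys's thesis \cite{andriy} (see Section~4.2 there) and then used without further argument. There is therefore nothing in the present paper to compare your proposal against.

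That said, your outline is broadly the standard route and is close in spirit to Haydys's argument: decompose $\Lambda^{2}_{+}T^{\ast}X$ via the K\"ahler structure to split the curvature equation into its $\omega_X$-component and its $(2,0)\oplus(0,2)$-component, and then run a Weitzenb\"ock/integration-by-parts argument coupling the Dirac equation with the curvature equation to force the separate vanishing of the two pieces. Two cautions. First, your description of the Dirac decomposition is a little loose: for a general target $M$ there is no global splitting of the spinor analogous to $u=(\alpha,\beta)$ on $\HH^{n}$; what one actually decomposes is the \emph{codomain} of the non-linear Dirac operator, namely $u^{\ast}TM$ tensored with a suitable line, along the $(\pm\imag)$-eigenbundles of $I_1$, and the Weitzenb\"ock identity is for $\mf{D}_{\ms A}$ itself (pairing $\mf{D}_{\ms A}u$ with $\mf{D}_{\ms A}u$ and unpacking the curvature term), not for $\overline{\partial}_{\ms A}$. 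Second, the hyperk\"ahler potential and the Swann-bundle homothety are not needed at this stage --- Haydys's reduction holds for any target hyperk\"ahler manifold with the rotating ${\rm U}(1)_r$ and tri-Hamiltonian ${\rm U}(1)_0$ actions; the potential only enters later in the paper, in the Hitchin--Kobayashi argument of Section~\ref{sec: hk correspondance}. Your treatment of gauge invariance is fine.
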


\section{A Hitchin-Kobayashi-type correspondence}
\label{sec: hk correspondance}

In this section, we establish a Hitchin-Kobayashi-type correspondence for the solutions of \eqref{eq: gen sw on kahler surface}. To understand what we mean by this, observe that the configuration space is naturally a K\"ahler manifold, with respect to the complex structure induced by the complex structures $I_X$ on $X$ and $I_1$ on $M$ (see \cite{riera99}, Sec. 2.3). The (right) action of $\Gg$ on $\Cc$ extends in a natural way to the action of its complexification $\Gg^{\C}\, = \, \Map \, (X, \, \C^{\ast})$, with respect to the induced complex structure.

The first and third equations of \eqref{eq: gen sw on kahler surface} are invariant under the action of $\Gg^{\C}$ whereas the second equation is invariant only under the action of $\Gg$. One would like to know if and when there exists a transformation $g \in \Gg^{\C}$, such that
\eqst{
\Lambda_{\omega_X}F_{g\cdot A} + \imag\mu_1\circ (g\cdot u) + \imag t = 0.
}
The necessary and sufficient conditions for the existence of such a gauge transformation lies at the heart of Hitchin-Kobayashi correspondence. The condition coincides with the notion of stability that arises in the algebraic construction of the moduli space of solutions to various gauge-theoretic equations using Geometric Invariant Theory (GIT).

One can also view this from the point of view of Kempf-Ness theory in infinite dimensions. The action of the gauge group $\Gg$ on $\Cc$ is holomorphic with the associated infinite dimensional moment map given by
\eqst{
\Upsilon_t(u,A)\,:=\, i\Lambda_{\omega_X}F_A + \imag\mu_1\circ u + \imag t\, .
}

\noindent Let $\Aa^{1,1}(P_0) \subset \Aa(P_0)$ be the space connections on $P_0$, whose curvature is of the form $(1,1)$. Then $\Cc^{1,1}:= \Map(P_{\T^2}, \, \swann)^{\T^2} \times \Aa^{1,1}(P_0)$ is a complex subvariety of $\Cc$, with a holomorphic action of $\Gg$, that extends to an action of $\Gg^{\C}$. The statement of correspondence now narrows down to asking when does a $\Gg^{\C}$-orbit in $\Cc^{1,1}$ intersect the zero of the infinite-dimensional moment map. This is a common paradigm in gauge theory and was pioneered by M. Atiyah and R. Bott \cite{atiyah-bott82}. It has been used in several other contexts, most notably by Donaldson \cite{donaldson83}, \cite{donaldson85} and by Uhlenbeck and Yau \cite{uhlenbeck-yau86} to relate stable vector bundles over complex manifolds with Hermitian-Einstein vector bundles. The idea has been subsequently used in the study of various other gauge theoretic equations \cite{oscar94}, \cite{oscar-bradlow95}, \cite{teleman96}. However, since we are interested in existence of solutions to \eqref{eq: gen sw on kahler surface}, we will additionally demand that the first equation of \eqref{eq: gen sw on kahler surface} also be satisfied.

A more general criterion of stability has been obtained by Mundet i Riera \cite{riera99} for K\"ahler vortex equations over compact K\"ahler manifolds. These are a system of vortex-like equations with values in a K\"ahler manifold $(F, \omega)$ and can be defined over any compact
K\"ahler manifold. In general, this criterion is not easy to evaluate. However, for target hyperK\"ahler manifold with a hyperK\"ahler potential, we show that the condition reduces to Kazdan-Warner equations. The existence of a hyperK\"ahler potential lies at the heart of this computation.

To effect our computations below, we need the completion of the configuration space and the gauge group in an appropriate $(k,p)$-Sobolev norm. We will assume that the Sobolev exponent $k-\frac{4}{p}>0$. It is in this setting that we evaluate the stability criterion. The assumption is implicit in our computations that follow. For details on Sobolev completion of maps between manifolds, we refer the reader to Subsection 4.1, Appendix B of \cite{wehrheim}. 

In order to give a clear picture of our construction, we begin by considering the simplest 
possible generalisation.

\subsection{A simpler case: Seiberg-Witten with multiple spinors}

Let $M$ be the flat quaternionic space $\Hom_{\C}(\C^n,\, \HH)$, where $\HH$ is regarded
as a complex vector space with respect to the complex structure
$R_{\overline{i}}$. The standard complex volume form on $\C^n$ defines a complex linear
isomorphism $(\C^n)^{\ast} \,\cong\, \C^n$. We can therefore identify $M \,\cong\, \HH
\otimes_{\C} \C^n \,\cong\, \HH^n$. The natural action of the group
${ \rm SU}(n) \times {\rm U}(1)_0$ on $\HH\otimes_{\C}\C^n$ corresponds to an action
$${\rm SU}(n) \times {\rm U}(1)_0 \,\hookrightarrow \,{\rm U}(n) \,\hookrightarrow 
\,{\rm Sp}(n) \,\curvearrowright\, \HH^n\, .$$ For $\alpha,\, \beta \,\in\, \C^n$, the moment
map associated to the ${\rm U}(1)_0$-action is given by
\eqst{
\mu_1(\alpha + \beta j) \,=\, \displaystyle -\sum^n_{i=1}\frac{\abs{\alpha_i}^2 -
\abs{\beta_i}^2}{2}, ~~ \mu_c(\alpha + \beta j)
\,=\, \displaystyle -\sum^n_{i=1}\pair{\alpha_i,\beta_i}\, ,
}
where $\alpha + \beta j \,=\, \sum^n_{i=1} \alpha_i s_i + j \sum^n_{i=1}\beta_i s_{n+i}$, and
$\{s_i\}$ is the complex spinor basis.

The rotating action of ${\rm U}(1)_r \subset {\rm Sp}(1)_+$ on $\HH\otimes_{\C}\C^n$ is
\eqst{
z \cdot (\alpha,\, \beta) \,\longmapsto\, (\alpha,\, \beta\, \overline{z})\, .
}

Fix a principal ${\rm SU}(n)$ bundle $P_{{\rm SU}(n)}$ over $X$, and denote by $Q$ the principal
bundle $P_{\T^2} \times_X P_{{\rm SU}(n)}$. As a representation of $\T^2 \times {\rm SU}(n)$, note
that $M$ decomposes as 
\eqst{
M \,=\, \C^n \oplus W\otimes_{\C} \C^n\, ,
}
where $W$ is the representation of ${\rm U}(1)_r$ on $\Lambda^{0,2}(\R^4)^{\ast} \,\cong
\, \C$. In particular,  the associated fiber bundle $P_{{\rm U}(1)_r} \times_{{\rm U}(1)_r} W$ is
the anti-canonical line bundle $K^{-1}_X$ over $X$. Therefore, any spinor
$$u \,\in \,\Map(Q, \, M)^{\T^2 \times {\rm SU}(n)}$$ decomposes into two components $f$ and $g$. In terms of the complex basis $\{s_i \}$, we can write a spinor $u = \displaystyle\sum^n_{i=1}f_i \, s_i + \displaystyle\sum^{n}_{i=1}g_i \, s_{i}$.

Assume that there exists a connection $B$ on $P_{{\rm SU}(n)}$ compatible with
the holomorphic structure, which means that
$F_B^{0,2} \,=\, 0$. Fix such a connection $B$. The configuration space is given by
\eqst{
\Cc \,=\, \Map \, (Q, \, M)^{\T^2 \times {\rm SU}(n)} \times \Aa(P_0)\, .
}

For a pair $(u,\, A) \,\in\, \Cc$, the (perturbed) SW equations with multiple spinors, on a compact, K\"ahler surface $X$ are
\eq{\label{eq: sw with multiple spinors on kahler surface}
\left\{
    \begin{array}{lcl}
     \displaystyle \sum^n_{i=1}\overline{\partial}_{A\otimes B}\, f_i + \overline{\partial}^{\ast}_{A\otimes B} \, g_i = 0 \\      
     \Lambda_{\omega_X} F_{A} \, - \, \imag \left( \displaystyle\sum^n_{i=1}\frac{\abs{f_i}^2 - \abs{g_i}^2}{2} \right) \, + \, \imag t = 0, ~~ t\in R \\
      \displaystyle\sum^n_{i=1}\pair{f_i,g_i} = 0, ~~
      F^{0,2}_{A} = 0
    \end{array}
  \right.}  
where $\pair{\cdot, \cdot}$ is the standard Hermitian inner product on $\C^n$. 

The K\"ahler structure on the configuration space is induced by the complex structures $I_X$ on $X$ and $R_{\overline{i}}$ on $\HH^n$. The moment map for the holomorphic action of $\Gg$ on $\Cc$ is
\eqst{
\Upsilon_t(u,A) \,=\, \Lambda_{\omega_X} F_{A} - \imag \left( \displaystyle\sum^n_{i=1}\frac{\abs{f_i}^2 - \abs{g_i}^2}{2} \right) + \imag t. 
}
Define $\Hh^{1,1} \subset \Cc$ to be the complex subvariety 
\eqst{
\Hh^{1,1} \, = \, \set{(u, A)\in \Cc ~|~ \overline{\partial}_{A\otimes B} \, u = 0 ~ \text{and} ~ F_A^{0,2} = 0}.
}
The moduli space of solutions to \eqref{eq: sw with multiple spinors on kahler surface} is a K\"ahler submanifold of $\Upsilon_t^{-1}(0)/\Gg$, given by
\eqst{
\Mm (B, g_X) \,:=\, \left(\Hh^{1,1} \cap \Upsilon_t^{-1}(0)\cap \set{(u,A)
\,\in\, \Cc ~|~\displaystyle\sum^n_{i=1}\pair{f_i,g_i} = 0}\right)/\Gg
}

Let $\Hh^{ss} \,=\, \set{(u,A)\in \Hh^{1,1}~|~ (f_1, f_2, \cdots \cdot , f_n) \not\equiv 0~\text{and}~\displaystyle\sum^n_{i=1}\pair{f_i, \, g_i} = 0} \subset \Hh^{1,1}$.

\begin{thm}[{\cite{bryan-wentworth96}}]
\label{thm: multiple spinors bryan and wenthworth}
The moduli space of solutions to \eqref{eq: sw with multiple spinors on kahler surface}
has a holomorphic description
\eqst{
\Mm (B, g_{X}) \,\cong\, \Hh^{ss}/\Gg^{\C}\, .
}
\end{thm}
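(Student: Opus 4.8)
The plan is to establish the two inclusions $\Mm(B,g_X) \subseteq \Hh^{ss}/\Gg^{\C}$ and $\Hh^{ss}/\Gg^{\C} \subseteq \Mm(B,g_X)$ separately, identifying the content of each direction with a Hitchin–Kobayashi-type statement at the level of the infinite-dimensional moment map $\Upsilon_t$. The ``easy'' direction is to show that every solution of \eqref{eq: sw with multiple spinors on kahler surface} lies, up to $\Gg^{\C}$-action, in $\Hh^{ss}$: the first and third equations of \eqref{eq: sw with multiple spinors on kahler surface} are precisely the conditions cutting out $\Hh^{1,1}$ together with $\sum_i\pair{f_i,g_i}=0$, so a solution automatically sits in the relevant complex subvariety; the only thing to check is the \emph{semistability} condition $(f_1,\dots,f_n)\not\equiv 0$, which follows from a vanishing/Weitzenb\"ock argument: if all $f_i\equiv 0$ then the second equation of \eqref{eq: sw with multiple spinors on kahler surface} forces $\Lambda_{\omega_X}F_A = \imag(\tfrac12\sum|g_i|^2) - \imag t$, and integrating against the K\"ahler form together with the holomorphicity constraint on the $g_i$ (which are now holomorphic sections of $K_X^{-1}\otimes(\text{line bundle}) $ after twisting by $B$) forces a sign contradiction for the perturbation parameter $t$ in the relevant chamber, exactly as in the reducible case for ordinary Seiberg–Witten on K\"ahler surfaces. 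This is the standard dichotomy between the two chambers; I would state the chamber hypothesis explicitly (implicit in $\Mm$ being nonempty) and push the contradiction through.

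The substantial direction is the converse: given $(u,A)\in\Hh^{ss}$, I must produce a complex gauge transformation $g\in\Gg^{\C}$ with $\Upsilon_t(g\cdot(u,A)) = 0$, and show it is unique up to the real gauge group $\Gg$. Since $\Gg^{\C} = \Map(X,\C^{\ast})$ and the real part of the Lie algebra acts by $f_i\mapsto e^{\phi}f_i$, $g_i\mapsto e^{-\phi}g_i$ (the rotating ${\rm U}(1)_r$-weight on the $g$-component is opposite to that on the $f$-component), writing $g = e^{\phi/2}$ reduces the moment-map equation to a scalar PDE for the real function $\phi$. A direct computation of $\Upsilon_t$ along the orbit yields an equation of the shape
\[
\Lambda_{\omega_X}F_A + \tfrac{1}{2}\Delta\phi - \tfrac{\imag}{2}\bigl(e^{\phi}|f|^2 - e^{-\phi}|g|^2\bigr) + \imag t = 0,
\]
which is precisely a \emph{Kazdan–Warner equation}: a semilinear elliptic equation $\Delta\phi = a e^{\phi} - b e^{-\phi} + c$ with $a = |f|^2 \ge 0$ (not identically zero, by semistability), $b = |g|^2\ge 0$, and $c$ determined by $\Lambda_{\omega_X}F_A$ and $t$. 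The existence and uniqueness of a solution is then a consequence of the classical Kazdan–Warner theory (via the direct method / sub- and super-solutions, or a convexity argument on the associated functional), where the condition $a\not\equiv 0$ is exactly what makes the nonlinearity coercive in the right direction. I would either invoke Kazdan–Warner directly or carry out the variational argument: the functional $\phi\mapsto \int_X \bigl(\tfrac14|d\phi|^2 + \tfrac12 a e^{\phi} + \tfrac12 b e^{-\phi} + \text{(linear term)}\bigr)$ is strictly convex, weakly lower semicontinuous and coercive on $W^{1,2}$ (using $a\not\equiv0$ to control $\int\phi_+$ and the linear term plus $be^{-\phi}$ to control $\int\phi_-$), hence has a unique critical point, which is smooth by elliptic regularity (here the Sobolev hypothesis $k - 4/p > 0$ guarantees the completed objects are continuous and the bootstrap is legitimate).

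Finally I would assemble the pieces: the map $\Hh^{ss}\to\Mm(B,g_X)$ sends $(u,A)$ to the unique $\Gg^{\C}$-translate on which $\Upsilon_t$ vanishes (and which therefore solves all of \eqref{eq: sw with multiple spinors on kahler surface}, since the first and third equations are $\Gg^{\C}$-invariant and already hold on $\Hh^{1,1}$), this descends to $\Hh^{ss}/\Gg^{\C}$ by uniqueness, and it is inverse to the inclusion $\Mm(B,g_X)\hookrightarrow\Hh^{ss}/\Gg^{\C}$ from the first direction. Both maps are holomorphic (the Kazdan–Warner solution depends smoothly on the data, and the complex structures match by Mundet i Riera's description of $\Cc$ as a K\"ahler manifold), giving the biholomorphism $\Mm(B,g_X)\cong\Hh^{ss}/\Gg^{\C}$. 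The main obstacle is the existence step in the substantial direction — specifically, verifying coercivity of the Kazdan–Warner functional in the Sobolev setting and handling the case where $g\equiv0$ separately (where the equation degenerates to the simpler $\Delta\phi = ae^{\phi} + c$ and one needs $\int_X(\Lambda_{\omega_X}F_A + t)\,dV$ of the correct sign, again a chamber condition).
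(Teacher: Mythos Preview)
The paper does not actually prove this theorem: it is quoted from Bryan--Wentworth, with the sole remark that ``the above correspondence reduces to a Kazdan--Warner type equation'' and the chamber condition \eqref{eq: existence for sw with multiple spinors}. Your proposal is a correct fleshing-out of precisely that strategy, and it mirrors the argument the paper does carry out in detail for the Swann-bundle case in Section~\ref{sec: hk correspondance}.

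Two small remarks. First, the opposite weights on the $f$- and $g$-components under $\Gg^{\C}$ come from the \emph{gauge} circle ${\rm U}(1)_0$ (sitting as the centre of ${\rm U}(n)\subset{\rm Sp}(n)$), not from the rotating ${\rm U}(1)_r$; the rotating action touches only the $g$-component (cf.\ Example~\ref{ex: rotating S1 action}). Your conclusion $f_i\mapsto e^{\phi}f_i$, $g_i\mapsto e^{-\phi}g_i$ is nonetheless correct, as the form of $\mu_1$ confirms. Second, your two-exponential equation $\Delta\phi = a e^{\phi} - b e^{-\phi} + c$ is slightly more general than the single-exponential version the paper records as Lemma~\ref{lem: kazdan-warner}; in the Swann case the homothetic $\R^{+}$-action collapses everything to one $e^{2f}$ term, whereas the $\HH^n$ model genuinely has two components with opposite ${\rm U}(1)_0$-weights. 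Your convexity/variational argument covers this, and the degenerate case $g\equiv 0$ you single out is exactly the situation to which Lemma~\ref{lem: kazdan-warner} applies directly.
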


The above correspondence reduces to a Kazdan-Warner type equation, which gives the necessary condition for existence of solutions to be
\eq{
\label{eq: existence for sw with multiple spinors}
t \,>\, \frac{4\pi}{vol(X)}\deg_{\omega_X} P_0\, .
}
Recall that the degree of the bundle $P_0$ is given by
\eqst{
\deg_{\omega_X} P_0  = \frac{\imag}{2\pi}\int_X \Lambda_{\omega_X}F_A. 
}

A choice of a large enough $t$ ensures that $(f_1, \,f_2,\, \cdots ,\, f_n) \,\not\equiv\, 0$. Bryan and Wentworth obtained the above correspondence when $P_{{\rm SU}(n)}$ is trivial and $B$ is a trivial connection. However, a verbatim argument carries over to the case when $P_{{\rm SU}(n)}$ is non-trivial.

The statement of Theorem \ref{thm: multiple spinors bryan and wenthworth} is an infinite dimensional analogue of a finite-dimensional principal. Namely, suppose that we are given a smooth, projective variety $W$, with a holomorphic action of a reductive Lie group $G^{\C}$. Let $\mu$ denote the moment map for the action of $G$ on $W$. Then the algebraic and the symplectic quotients agree; i.e,
\eqst{
\mu^{-1}(0)/G \cong W^{ss}/G^{\C}
}
where $W^{ss}$ is a dense open set.

Under the assumption \eqref{eq: existence for sw with multiple spinors} we will now establish a map from $\Mm(B, g_X)$ to the moduli space of solutions to the usual SW monopole equations 
\eq{\label{eq: sw on kahler surface}
\left\{
    \begin{array}{lcl}
     \overline{\partial}_{A}\alpha \,=\, 0, ~~~ \beta = 0 \\      
     \Lambda_{\omega_X} F_{A} - \imag \left(\frac{\abs{\alpha}^2}{2} - t \right)= 0, ~~ t\in R \\
    \pair{\alpha,\beta} = 0\\[1mm]
      F^{0,2}_{A} = 0
    \end{array}
  \right.}
on $X$. We denote the latter moduli space by $\Mm^{SW}(g_X)$.

\subsubsection{\textbf{SW with multiple spinors $\Rightarrow$ SW }}

We will denote by $u$ the equivariant map $\Map(Q, \, \HH^n)^{\T^2\times {\rm SU}(n)}$ and by $\phi$ a positive spinor; i.e, a $\T^2$-equivariant map $\Map(P_{\T^2}, \,\HH)^{\T^2}$.

Let $(u,A)$ be a solution to \eqref{eq: sw with multiple spinors on kahler surface} and suppose that there exists a $\phi = \alpha + \beta$ such that 
\eqst{
\frac{\imag}{2} \left(\abs{\alpha}^2 - \abs{\beta}^2 \right) = \imag \left( \displaystyle\sum^n_{i=1}\frac{\abs{f_i}^2 - \abs{g_i}^2}{2} \right), ~~~ \pair{\alpha,\, \beta} = 0.
}
Here $\phi = \alpha + \beta$ is the usual decomposition of the spinor on a K\"ahler surface. Owing to \eqref{eq: existence for sw with multiple spinors}, a non-trivial $\phi$, satisfying the above equations, always exists. Moreover, the condition \eqref{eq: existence for sw with multiple spinors} also implies that any solution $\phi$ to the monopole equations will have $\beta = 0$. Therefore, without loss of generality, we may assume that $\beta = 0$. Therefore, pair $(\phi, A)$ satisfies
\eq{
\label{eq: part sw on kahler surface}
\left\{
    \begin{array}{lcl}
    \Lambda_{\omega_X} F_{A} - \frac{\imag}{2} \left(\abs{\alpha}^2  - t \right) = 0 \\[1mm]
      F^{0,2}_{A} = 0.
    \end{array}
  \right.}
  
\begin{lem}
\label{lem: sw with mult. spinor implies hol. spinor}
The spinor $\phi$ satisfying \eqref{eq: part sw on kahler surface} is holomorphic; i.e.,
\eqst{
\overline{\partial}_{A}\alpha = 0.
}
\end{lem}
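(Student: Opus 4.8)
The plan is to prove that $\norm{\overline{\partial}_A\alpha}_{L^2}=0$.

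First I would exploit the fact that $(u,A)$ is a genuine solution of the full system \eqref{eq: sw with multiple spinors on kahler surface} to upgrade the data of $(\phi,A)$. Applying $\overline{\partial}_{A\otimes B}$ to the Dirac equation $\sum_i\bigl(\overline{\partial}_{A\otimes B}f_i+\overline{\partial}^{\ast}_{A\otimes B}g_i\bigr)=0$ and using $F^{0,2}_{A\otimes B}=F^{0,2}_A+F^{0,2}_B=0$ kills the curvature term $\overline{\partial}^{2}_{A\otimes B}f_i$, so $\overline{\partial}_{A\otimes B}\overline{\partial}^{\ast}_{A\otimes B}g_i=0$; pairing with $g_i$ over $X$ and integrating by parts gives $\overline{\partial}^{\ast}_{A\otimes B}g_i=0$, whence $\overline{\partial}_{A\otimes B}f_i=0$ — the $f_i$ are holomorphic. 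By the bound \eqref{eq: existence for sw with multiple spinors}, exactly as in the standard (Witten) analysis, one gets $g_i\equiv 0$ and $(f_1,\dots,f_n)\not\equiv 0$, so $\abs{\alpha}^2=\sum_i\abs{f_i}^2$ is the pointwise squared norm of a non-zero holomorphic section, and integrating the curvature equation of \eqref{eq: part sw on kahler surface} gives $\norm{\alpha}^2_{L^2}=t\cdot vol(X)-4\pi\deg_{\omega_X}P_0>0$, so $\alpha\not\equiv 0$. Finally $F^{0,2}_A=0$ makes the line bundle underlying $P_0$ a holomorphic line bundle with Dolbeault operator $\overline{\partial}_A$.

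Next I would combine the Bochner--Kodaira identity on that holomorphic line bundle, $\overline{\partial}^{\ast}_A\overline{\partial}_A=\tfrac12\nabla^{\ast}_A\nabla_A-\tfrac12\,\imag\Lambda_{\omega_X}F_A$ on sections, with the curvature equation $\imag\Lambda_{\omega_X}F_A=\tfrac12(t-\abs{\alpha}^2)$, to get
\eqst{
\norm{\overline{\partial}_A\alpha}^2_{L^2}=\tfrac12\norm{\nabla_A\alpha}^2_{L^2}-\tfrac14\int_X(t-\abs{\alpha}^2)\,\abs{\alpha}^2\,dV .
}
To see that the right side is non-positive I would feed in $\abs{\alpha}^2=\sum_i\abs{f_i}^2$ with $f_i$ holomorphic. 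On $U:=X\setminus\{\alpha=0\}$ the Poincar\'e--Lelong formula for the linear subsystem spanned by $f_1,\dots,f_n$ gives $\imag\Lambda_{\omega_X}\partial\overline{\partial}\log\abs{\alpha}^2=-\imag\Lambda_{\omega_X}F_A+\Lambda_{\omega_X}\Theta$ with $\Theta\ge 0$ the pull-back Fubini--Study form, while $\int_X\imag\Lambda_{\omega_X}\partial\overline{\partial}\log\abs{\alpha}^2\,dV=0$ by Stokes (since $\log\abs{\alpha}^2\in L^1(X)$ and $\omega_X$ is closed). At the same time, expanding $\imag\Lambda_{\omega_X}\partial\overline{\partial}\log\abs{\alpha}^2$ directly in terms of $\abs{\partial_A\alpha}^2$ and $\abs{\overline{\partial}_A\alpha}^2$ is clean here, because $\alpha$ is a section of a \emph{line} bundle (so the Cauchy--Schwarz terms cancel) and because $\overline{\partial}_A(\overline{\partial}_A\alpha)=F^{0,2}_A\alpha=0$ eliminates the potential second-order $\overline{\partial}\overline{\partial}$-term. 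Comparing these expressions, integrating over $X$ and using $\int_X\imag\Lambda_{\omega_X}F_A\,dV=2\pi\deg_{\omega_X}P_0=\tfrac12\bigl(t\cdot vol(X)-\norm{\alpha}^2_{L^2}\bigr)$, should show that the right side of the displayed identity is $\le 0$; hence $\overline{\partial}_A\alpha=0$.

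The hard part will be Step~2: carrying out the expansion of $\imag\Lambda_{\omega_X}\partial\overline{\partial}\log\abs{\alpha}^2$ for a not-necessarily-holomorphic section of a line bundle while keeping track of every first-order term with the correct sign, and then making the Poincar\'e--Lelong/Stokes bookkeeping rigorous across the zero locus $\{\alpha=0\}$ (a standard regularisation of currents, but one must verify that the Lelong masses match on both sides, and that the second-order contributions integrate away after the integrations by parts). By contrast the Weitzenb\"ock identity, the integrations by parts, and the consequences of the Dirac equation in Step~1 are routine. Throughout one uses the standing hypothesis $k-\tfrac{4}{p}>0$, which is exactly what legitimises these pointwise manipulations and integrations by parts in the Sobolev completions of $\Cc$ and $\Gg$.
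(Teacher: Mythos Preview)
Your route is quite different from the paper's. The paper gives a two-line \emph{pointwise} argument: differentiate the identity $\tfrac12\abs{\alpha}^{2}=\tfrac12\sum_i(\abs{f_i}^{2}-\abs{g_i}^{2})$, take the $(0,1)$--part on each side, and use the Dirac equation $\sum_i(\overline{\partial}_{A\otimes B}f_i+\overline{\partial}^{\,*}_{A\otimes B}g_i)=0$ together with the constraint $\sum_i\pair{f_i,g_i}=0$ to obtain $\pair{\alpha,\overline{\partial}_A\alpha}=0$ as a $(0,1)$--form; one further $\overline{\partial}$ (using $F_A^{0,2}=0$) then gives $\abs{\overline{\partial}_A\alpha}^{2}=0$. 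No integration, no Weitzenb\"ock formula, and no preliminary decomposition of the Dirac equation are used.

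Your Step~1 is correct and standard, but it already overshoots: you are essentially reproving the Bryan--Wentworth decomposition, which the lemma does not need. Your Step~2, on the other hand, has a genuine gap. The Bochner--Kodaira identity you write down involves $\norm{\nabla_A\alpha}_{L^2}^{2}$, and this quantity is \emph{not} determined by $\abs{\alpha}^{2}$: any section of a line bundle can be multiplied by a smooth unit-modulus function without changing its pointwise norm, while $\norm{\nabla_A\alpha}_{L^2}^{2}$ can be made arbitrarily large under such a change. The Poincar\'e--Lelong information you propose to feed in depends only on $\log\abs{\alpha}^{2}=\log\sum_i\abs{f_i}^{2}$, so it cannot bound $\norm{\nabla_A\alpha}_{L^2}^{2}$ from above; indeed, your two proposed expansions of $\imag\Lambda_{\omega_X}\partial\overline{\partial}\log\abs{\alpha}^{2}$ both compute the same function of $\abs{\alpha}^{2}$ and so their comparison is tautological. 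Put differently, after Step~1 you are trying to show that an arbitrary section $\alpha$ of $L$ with prescribed $\abs{\alpha}^{2}=\sum_i\abs{f_i}^{2}$ is $\overline{\partial}_A$--closed, and no $L^{2}$ estimate that sees only $\abs{\alpha}^{2}$ can do this. What actually pins down $\overline{\partial}_A\alpha$ is the first-order relation between $d\abs{\alpha}^{2}$ and the Dirac equation, which is exactly what the paper exploits.
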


\begin{proof}
We have
\eqst{
d\left(\frac{\abs{\alpha}^2}{2}\right)
= d\left( \displaystyle\sum^n_{i=1}\frac{\abs{f_i}^2 - \abs{g_i}^2}{2} \right).
}
Computing the left hand side:
\alst{
\frac{\pair{\alpha, D_{A}\alpha}_{\R}}{2}
& = \frac{1}{2}\left (\pair{\alpha, \, \overline{\partial}_{A}\alpha}_{\R} + \pair{\alpha, \, \partial_{A}\alpha}_{\R} \right)
}
where $\pair{\cdot, \cdot}_{\R}$ denotes the real part of the respective Hermitian inner products. Similarly, on the right hand side we have
\alst{
d\left( \displaystyle\sum^n_{i=1}\frac{\abs{f_i}^2 - \abs{g_i}^2}{2} \right)
& = \displaystyle\sum^n_{i=1} \frac{1}{2}\left(\left( \pair{f_i, \, \overline{\partial}_{A\otimes B} \, f_i}_{\R} \right) + \left(\pair{f_i, \, \partial_{A\otimes B} \, f_i}_{\R} - \pair{g_i, \,  \partial_{A\otimes B} \, g_i}_{\R} \right) \right)
}
Equating the $(0,1)$-parts on both the sides, we get
$\displaystyle \pair{\alpha, \overline{\partial}_{A}\alpha}_{\R} = \displaystyle\sum^n_{i=1} \pair{f_i,\, \overline{\partial}_{A\otimes B}\,f_i}_{\R}.$ The equation $\displaystyle\sum^n_{i=1} \pair{f_i, g_i} = 0$ implies
\eqst{
d^{\ast}\displaystyle\sum^n_{i=1} \pair{f_i,  \, g_i} = \displaystyle\sum^n_{i=1} \pair{f_i,  \, \overline{\partial}^{\ast}_{A\otimes B}  \, g_i} = 0 \implies \displaystyle\sum^n_{i=1}\pair{f_i,  \, \overline{\partial}^{\ast}_{A\otimes B}  \, g_i}_{\R}
\,=\, 0\, .
}
Together, this gives
\eqst{
\pair{\alpha, \overline{\partial}_{A}\alpha}_{\R} \,=\, \displaystyle\sum^n_{i=1} \pair{f_i,  \, \overline{\partial}_{A\otimes B} \, f_i + \overline{\partial}^{\ast}_{A\otimes B}  \, g_i}_{\R} = \pair{\left(\displaystyle\sum^n_{i=1} f_i \right),\left(\displaystyle\sum^n_{i=1} \overline{\partial}_{A\otimes B} \, f_i + \overline{\partial}^{\ast}_{A\otimes B} \,  g_i \right)}_{\R}.
}
The last equality follows from the fact that $\pair{f_i, \, \overline{\partial}_{A\otimes B}f_j} = \pair{f_i, \, \overline{\partial}^{\ast}_{A\otimes B}  \, g_j} = 0$ for $i \neq j$. Therefore, from \eqref{eq: sw with multiple spinors on kahler surface}, we conclude that $\pair{\alpha,  \, \overline{\partial}_{A} \, \alpha}_{\R} = 0$ and so, the $(0,2)$-form
\eqst{
\overline{\partial}\pair{\alpha, \overline{\partial}_{A}\alpha} \,=\, \pair{\overline{\partial}_{A}\alpha\wedge\overline{\partial}_{A}\alpha} \,=\, 0\, .
}
The statement of the Lemma follows.
\end{proof}

In particular, we have shown that $(A, \phi)$ is a solution to the monopole equations \eqref{eq: sw on kahler surface}. The uniqueness of the solution $(\phi, A)$ is easily seen.
  
\subsection{General case: Swann bundles}

We will now implement the above program in a more general setting where $\HH^n$ is replaced by a more general target hyperK\"ahler manifold, with a hyperK\"ahler potential; i.e., the total space of a Swann bundle. The strategy for the program is the same as that for multi-monopole equations discussed in the previous subsection. 

Define $\Hh^{1,1} \,=\, \set{(u,A) \in \Cc ~|~ \overline{\partial}_{\ms{A}}u = 0 ~ \text{and} ~ F_A^{0,2} = 0}$. Then, the moduli space of solutions to \eqref{eq: gen sw on kahler surface} is once again a K\"ahler submanifold of $\Upsilon_t^{-1}(0)/\Gg$, given by
\eqst{
\mathfrak{M} = \left(\Hh^{1,1} \cap \Upsilon_t^{-1}(0)\cap \set{u \in \Map(P_{\T^2},\, \swann)^{\T^2} ~|~\mu_c\circ u = 0}\right)/\Gg.
}

Let $F_0 \subset \swann$ be the fixed-point set of the~ $U(1)_0$ action on $\swann$. Consider the dense open subset of $\Hh^{1,1}$ 
\eqst{
\Hh^{ss} := \set{(u,A) \in \Hh^{1,1}~|~ u(P_{\T^2}) \not\subset F_0}.
}

\begin{thm}[\textbf{Hitchin-Kobayashi correspondence}]
Let $(u,A) \in \Hh^{1,1}$ and assume that $t > \frac{4\pi}{vol(X)}\deg_{\omega_X} P_0$. Then, the moduli space $\mathfrak{M}$ is non-empty. Moreover, $\mathfrak{M}$ has a holomorphic description
\eqst{
\Mm \cong \Hh^{ss}/\Gg^{\C}.
}
\end{thm}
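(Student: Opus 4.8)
The plan is to deduce the correspondence from Mundet i Riera's Hitchin--Kobayashi theorem for K\"ahler vortex equations \cite{riera99}, making the abstract stability criterion fully explicit with the help of Swann's structure theory. The equations $\overline{\partial}_{\ms{A}}u = 0$ and $F_A^{0,2} = 0$ cut out $\Hh^{1,1}\subset\Cc$, which carries a holomorphic $\Gg^{\C}$-action, and the complex equation $\mu_c\circ u = 0$ is, since $\mu_c$ is $I_1$-holomorphic and the $\mathrm{U}(1)_0$-action is abelian, preserved by $\Gg^{\C}$. So the content of the theorem is that for every $(u,A)\in\Hh^{ss}$ there is a $g\in\Gg^{\C}$, unique modulo $\Gg$, with $\Upsilon_t\bigl(g\cdot(u,A)\bigr) = 0$, and that at that point $\mu_c\circ u = 0$. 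Since $\Gg^{\C}/\Gg$ is parametrised by $\Omega^0(X;\R)$ through $s\mapsto e^{s}$, and only the factor $e^{s}$ moves $\Upsilon_t$, this is --- via the Kempf--Ness/Atiyah--Bott paradigm recalled in Section \ref{sec: hk correspondance} --- equivalent to solving $\Upsilon_t\bigl(e^{s}\cdot(u,A)\bigr) = 0$ for $s\in\Omega^0(X;\R)$, uniquely: the $\Gg$-invariant functional on $\Omega^0(X;\R)$ whose gradient is $\Upsilon_t(e^{s}\cdot(u,A))$ is convex along lines, and one needs it to be proper with a (unique) critical point exactly when $(u,A)\in\Hh^{ss}$ and $t$ is large.

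The Euler--Lagrange equation for $s$ is obtained by differentiating $\Upsilon_t$ along $e^{s}$. The connection contributes the standard term $\Lambda_{\omega_X}F_{e^{s}\cdot A} = \Lambda_{\omega_X}F_A + \imag\,\Delta_X s$ (up to sign and normalisation of the Laplacian). For the spinor, the complexified $\mathrm{U}(1)_0$-action on $M$ is generated by $-I_1$ applied to the $\mathrm{U}(1)_0$-Killing field, and here the hyperK\"ahler potential and Swann's description of $M$ are indispensable: Theorem \ref{thm: swanns theorem} and the moment-map formula \eqref{eq: moment map on swann bundle} force $\mu_1$ to be homogeneous of degree two under the homothety $\euler = \grad\rho_0$, and, after restricting to the locus cut out by $\overline{\partial}_{\ms{A}}u = 0$ --- the analogue of setting $\beta = 0$ in the flat multi-spinor model of the preceding subsection --- one obtains
\eqst{
\mu_1\circ\bigl(e^{s}\cdot u\bigr) \,=\, e^{2s}\,\bigl(\mu_1\circ u\bigr), \qquad \mu_1\circ u \,\le\, 0,
}
with $\mu_1\circ u\equiv 0$ exactly where $u$ lands in the fixed-point set $F_0$ of $\mathrm{U}(1)_0$. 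Writing $h := -\mu_1\circ u\ge 0$, the equation $\Upsilon_t(e^{s}\cdot(u,A)) = 0$ becomes the Kazdan--Warner equation
\eqst{
\Delta_X s \,+\, h\,e^{2s} \,=\, t \,-\,\imag\Lambda_{\omega_X}F_A
}
on the compact K\"ahler surface $X$, where $h\ge 0$ and, crucially, $h\not\equiv 0$ precisely because $(u,A)\in\Hh^{ss}$ --- this is exactly why $F_0$ enters the definition of semistability.

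By the Kazdan--Warner theorem, for $h\ge 0$ with $h\not\equiv 0$ this equation has a unique smooth solution $s$ if and only if $\int_X\bigl(t - \imag\Lambda_{\omega_X}F_A\bigr) > 0$; using $\deg_{\omega_X}P_0 = \tfrac{\imag}{2\pi}\int_X\Lambda_{\omega_X}F_A$ and tracking the remaining normalisations, this is precisely the hypothesis $t > \frac{4\pi}{vol(X)}\deg_{\omega_X}P_0$ (the same bound as in \eqref{eq: existence for sw with multiple spinors}). Existence of $s$ makes $\Hh^{ss}\to\mathfrak M$ surjective, so $\mathfrak M\neq\varnothing$; uniqueness of $s$ shows that two solutions of \eqref{eq: gen sw on kahler surface} lying in one $\Gg^{\C}$-orbit differ by an element of $\Gg$, giving injectivity, whence $\mathfrak M\cong\Hh^{ss}/\Gg^{\C}$. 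Two further points have to be settled. First, the point produced above also satisfies $\mu_c\circ u = 0$: since $\overline{\partial}_{\ms{A}}u = 0$ and $\mu_c$ is $I_1$-holomorphic, $\mu_c\circ u$ is (up to conjugation) a holomorphic section of a line bundle over $X$ which, by the $\T^2$-weights of $\mu_c$ on the Swann bundle, has negative degree and therefore vanishes --- the analogue of Lemma \ref{lem: sw with mult. spinor implies hol. spinor}; equivalently one builds $\mu_c\circ u = 0$ into the definition of $\Hh^{ss}$, which is a closed, $\Gg^{\C}$-invariant condition. Second, the functional-analytic content --- closedness of $\Gg^{\C}$-orbits, properness of the functional, and elliptic regularity lifting the Sobolev solution $s$ to a smooth one --- must be carried out in the $(k,p)$-Sobolev completions with $k - \tfrac{4}{p} > 0$ fixed in Section \ref{sec: hk correspondance}; this is where that assumption is used.

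The main obstacle is the middle step: establishing $\mu_1\circ(e^{s}\cdot u) = e^{2s}(\mu_1\circ u)$ together with the sign $\mu_1\circ u\le 0$. Both rest essentially on the cone/homothety structure of a Swann bundle, and this is precisely where ``the existence of a hyperK\"ahler potential lies at the heart of the computation'': for a general target hyperK\"ahler manifold $s\mapsto\mu_1\circ(e^{s}\cdot u)$ need not be a pure exponential, the nonlinearity would fail to be of Kazdan--Warner type, and the stability condition could not be collapsed to the single scalar inequality above. A secondary subtlety is justifying the reduction to the locus $\mu_1\circ u\le 0$ (the analogue of $\beta = 0$), which uses the equation $\overline{\partial}_{\ms{A}}u = 0$ together with the interplay between the rotating $\mathrm{U}(1)_r$-action and the tri-Hamiltonian $\mathrm{U}(1)_0$-action.
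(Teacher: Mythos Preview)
Your core reduction is exactly the paper's: write $g=e^{s}$ with $s$ real, use the homothetic $\R^{+}$-action on the Swann bundle to obtain $\mu_1\circ(e^{s}\cdot u)=e^{2s}\,(\mu_1\circ u)$, and land on a Kazdan--Warner equation whose solvability is equivalent to $t>\tfrac{4\pi}{\mathrm{vol}(X)}\deg_{\omega_X}P_0$. The extra framing through Mundet i Riera and a Kempf--Ness functional is context rather than a different method.

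The genuine gap is your justification of the sign $\mu_1\circ u\le 0$ (equivalently $h\ge 0$, which is the positivity hypothesis in the Kazdan--Warner lemma). You assert that it holds ``after restricting to the locus cut out by $\overline{\partial}_{\ms{A}}u=0$ --- the analogue of setting $\beta=0$''. That inference fails already in the flat model $\HH^n$: there $\mu_1=-\tfrac12\sum(|f_i|^2-|g_i|^2)$, and the holomorphicity equation $\overline{\partial}_{A\otimes B}u=0$ alone does not force $g\equiv 0$ nor even $|f|^2\ge|g|^2$ pointwise; the vanishing of the $g$-component uses the full coupled system together with the degree inequality. The paper does \emph{not} derive this sign inside the proof. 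It writes $\mf{a}(u)=\tfrac12\,g_M(K^M_{\xi_1},K^M_{\mathrm{i}})$, records that Kazdan--Warner needs $\mf{a}(u)\ge 0$ off a null set, explicitly flags this as ``a priori unclear'', and defers it to the next subsection, where the map to ordinary Seiberg--Witten (Theorem~\ref{thm: gen. sw implies hol. spinor}) produces $\mf{a}(u)=\tfrac12|\alpha|^2$. So either defer the positivity as the paper does, or supply an independent argument from the Swann-bundle structure; but do not claim it follows from $\overline{\partial}_{\ms{A}}u=0$. Your last paragraph already hints that this step is delicate --- it is, and it is not closed by holomorphicity.

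A smaller point: your degree argument for $\mu_c\circ u=0$ is an addition not present in the paper's proof of this theorem; the paper simply carries the condition $\mu_c\circ u=0$ as part of the intersection defining $\mathfrak M$ and does not re-derive it inside the Hitchin--Kobayashi argument.
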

\begin{proof}
Our aim is to find conditions under which there exists a $g \in \Gg^{\C}$ such that 
\eqst{
\Upsilon_t(g\cdot u, g \cdot A) = 0.
}
 Consider an element $e^f \in \Gg^{\C}$. If $f$ is purely imaginary, then $e^f \in \Gg$. Since the equations are invariant under $\Gg$, we consider the case when $f$ is real. Now the complexified gauge group $\Gg^{\C}$ acts on $\Hh^{1,1}$ as
\eqst{
e^f \cdot A \longmapsto A + \partial \overline{f} - \overline{\partial} f ~~~ \text{and} ~~~ e^f \cdot u, ~~ \text{for}~ e^f \in \Gg^{\C}.
}
So $F_{e^f \cdot A} \,=\, F_A + \overline{\partial}\partial f - \partial\overline{\partial} f$. From \eqref{eq: moment map on swann bundle}, the moment map component $\mu_1\circ u$ can be written down explicitly as
\eqst{
\mu_1 \circ u = - \frac{1}{2}g_M(K^M_{\xi_1}|_u, K^M_{\mathrm{i}}|_u)}
where $\xi_1 \in S^2 \subset \mf{sp}(1)$ is the basis element fixed by the rotating $U(1)_r$-action. Owing to the homothetic $\R^{\ast}$-action on $\swann$, we have
\eqst{
\mu_1 \circ (e^f \cdot u) = - e^{2f} \frac{1}{2}\cdot g_M(K^M_{\xi_1}|_u, K^M_{\mathrm{i}}|_u).
}
Observe that $\frac{1}{2} g_M(K^M_{\xi_1}|_u,\, K^M_{\mathrm{i}}|_u)
\,:\, P_{\T^2} \,\longrightarrow\, \R$ is $\T^2$-invariant and so we can think of it as a smooth, real-valued function on $X$. For simplicity, let $\mf{a}(u) = \frac{1}{2} g_M(K^M_{\xi_1}|_u, K^M_{\mathrm{i}}|_u)$. We can therefore write
\al{
\label{eq: kazdan-warner}
\Lambda_{\omega_X}F_{e^f\cdot A} + \imag\mu_1\circ (e^f\cdot u) \nonumber + \imag t
& = \Lambda_{\omega_X}(\overline{\partial}\partial - \partial\overline{\partial})f - \frac{\imag}{2} e^{2f}\mf{a}(u) + \Lambda_{\omega_X}F_{A} + \imag t. \nonumber
}
Hence, in order to find a $g \in \Gg^{\C}$ such that $\Upsilon_t(g\cdot u, g \cdot A) = 0$, we need to solve
\eq{
\label{eq: gsw kazdan-warner}
\Delta_X f + e^{2f}\mf{a}(u) \,=\, (t - 2\imag\Lambda_{\omega_X}F_{A})
}
where $\Delta_X$ is the positive definite Laplacian on $X$. Let $w \,=\,
t - 2\imag\Lambda_{\omega_X}F_{A}$. 

We now recall a result of \cite{kazdan-warner74} which will be used.

\begin{lem}[{\textbf{Kazdan-Warner}}]
\label{lem: kazdan-warner}
Let $X$ be a compact Riemannian manifold, and  let $B$ and $w$ be smooth functions on $X$ with $B$ being positive outside of a measure zero set and $\int_X w \,>\, 0$. Let $\Delta_X \,=\, -2\imag \partial \overline{\partial}$ be the negative definite Laplacian on $X$. Then the equation 
\eqst{
\Delta_X f +B(x) e^{2f} - w \,=\, 0
}
has a unique solution.
\end{lem}

The condition $\int_X w \,>\, 0$ translates to fixing a $ t \,>\, \frac{4\pi}{vol(X)}\deg_{\omega_X} P_0$. It follows that there exists a unique solution to \eqref{eq: gsw kazdan-warner}.The statement of the theorem follows. 
\end{proof}

A technical requirement in Lemma \ref{lem: kazdan-warner} is that the function $B$ be a positive function, outside of a measure zero set. A priori, it is unclear why this should hold for an abstract map $\mf{a}(u)$. However, in the following section, we will show that solutions to \eqref{eq: gen sw on kahler surface} determine a unique solution to \eqref{eq: sw on kahler surface}. This in turn will imply that $\mf{a}(u) = \abs{\alpha}^2$. Therefore, the technicality is automatically satisfied for $ t \,>\, \frac{4\pi}{vol(X)}\deg_{\omega_X} P_0$.

\subsubsection{\textbf{Solutions to GSW $\Rightarrow$ SW}}

Assume that $t > \frac{4\pi}{vol(X)}\deg_{\omega_X} P_0$ and let $(u,A)$ be a solution to \eqref{eq: gen sw on kahler surface}. Moreover, let $\phi$ be a usual spinor, satisfying
\eqst{
\left\{
    \begin{array}{lcl}
      \Lambda_{\omega_X} F_{A} - \imag \left(\frac{\abs{\alpha}^2 }{2} - t \right) = 0, ~~ t\in R \\
    F^{0,2}_{A} = 0
    \end{array}
  \right.}
In particular, $\mu_1\circ u = \frac{\abs{\alpha}^2 }{2}$. Once again, owing to the fact that $t > \frac{4\pi}{vol(X)}\deg_{\omega_X} P_0$, such a spinor always exists and and has $\beta = 0$, where $\phi = \alpha + \beta$. The second condition is already satisfied since $(u, A)$ is a solution to \eqref{eq: gen sw on kahler surface}.

\begin{thm}
\label{thm: gen. sw implies hol. spinor}
A spinor $\phi$ satisfying the above equation is holomorphic; i.e.,
\eqst{
\overline{\partial}_{A}\alpha \,=\, 0\, .
}
\end{thm}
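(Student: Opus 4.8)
The plan is to mimic the argument of Lemma \ref{lem: sw with mult. spinor implies hol. spinor}, but without a direct spinor-component decomposition of $u$; instead we exploit the hyperK\"ahler potential and the explicit form \eqref{eq: moment map on swann bundle} of the moment map. The starting identity is that, by our choice of $\phi$,
\eqst{
d\left(\frac{\abs{\alpha}^2}{2}\right) \,=\, d\big(\mu_1\circ u\big) \,=\, d\,\mf{a}(u),
}
where $\mf{a}(u) = \tfrac12 g_M(K^M_{\xi_1}|_u,K^M_{\mathrm{i}}|_u)$ as above. Expanding the left side gives $\pair{\alpha,D_A\alpha}_{\R}$, whose $(0,1)$-part is $\pair{\alpha,\overline{\partial}_A\alpha}_{\R}$. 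The first step is therefore to compute the $(0,1)$-part of $d\,\mf{a}(u)$ and identify it, using $\overline{\partial}_{\ms{A}}u=0$ and $\mu_c\circ u=0$ (the first and third equations of \eqref{eq: gen sw on kahler surface}), with an expression that vanishes.

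Concretely, I would differentiate $\mf{a}(u)=-\mu_1\circ u$ along $X$: since $\mu_1$ is a moment map for the ${\rm U}(1)_0$-action with respect to $\omega_1$, we have $d(\mu_1\circ u)(\cdot) = \omega_1\big(K^M_{\mathrm{i}}|_u, D_{\ms{A}}u(\cdot)\big)$ (the connection term drops because $K^M_{\ms{A}}|_u$ lands in the direction of the Killing fields, against which $\mu_1$ is constant along level sets — more precisely one uses $\Ll_{K^M}\mu_1=0$ and the moment map identity). Writing $D_{\ms{A}}u = \overline{\partial}_{\ms{A}}u + \partial_{\ms{A}}u$ and using $\overline{\partial}_{\ms{A}}u=0$, the $(0,1)$-component of $d(\mu_1\circ u)$ becomes $\omega_1\big(K^M_{\mathrm{i}}|_u,(\partial_{\ms{A}}u)^{(0,1)\text{-slot}}\big)$, which after applying $-I_1$ (recall $M$ is treated as K\"ahler for $-I_1$, and $\omega_1(-I_1\cdot,\cdot)=g_M$) can be rewritten as $g_M\big(K^M_{\mathrm{i}}|_u,\partial_{\ms{A}}u\big)$ paired appropriately. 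The key algebraic point, exactly as in the flat case, is that the three GSW equations force the relevant pairing to coincide with $\pair{\alpha,\overline{\partial}_A\alpha}_{\R}$ while simultaneously the complex moment map equation $\mu_c\circ u=0$ differentiates to kill the complementary term (the analogue of $\sum\pair{f_i,\overline{\partial}^{\ast}_{A\otimes B}g_i}=0$). Hence $\pair{\alpha,\overline{\partial}_A\alpha}_{\R}=0$ as a real $(0,1)$-form.

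From $\pair{\alpha,\overline{\partial}_A\alpha}_{\R}=0$ the conclusion is formal, just as before: the $(0,2)$-form $\overline{\partial}\pair{\alpha,\overline{\partial}_A\alpha} = \pair{\overline{\partial}_A\alpha\wedge\overline{\partial}_A\alpha}$ vanishes, and pointwise $\pair{\overline{\partial}_A\alpha\wedge\overline{\partial}_A\alpha}$ equals (a positive multiple of) $\abs{\overline{\partial}_A\alpha}^2$ times the volume form on a K\"ahler surface, so $\overline{\partial}_A\alpha=0$.

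I expect the main obstacle to be the middle step: making rigorous the claim that the $(0,1)$-part of $d(\mu_1\circ u)$, after using $\overline{\partial}_{\ms{A}}u=0$ and $\mu_c\circ u=0$, reduces to $\pair{\alpha,\overline{\partial}_A\alpha}_{\R}$ — i.e.\ reproducing the cancellation $\sum_i\pair{f_i,\overline{\partial}^{\ast}_{A\otimes B}g_i}_{\R}=0$ intrinsically on the Swann bundle. The cleanest route is probably to invoke the local model: near a point of $X$ where $u$ avoids $F_0$, Swann's structure theory realizes a neighbourhood in $\swann$ equivariantly inside $\HH\otimes_{\C}\C^m$ for suitable $m$ with the ${\rm U}(1)_0$- and ${\rm U}(1)_r$-actions in the standard form of the previous subsection, so that $\mu_1$, $\mu_c$ and the homothety are the flat ones; then the computation of Lemma \ref{lem: sw with mult. spinor implies hol. spinor} applies verbatim to give $\pair{\alpha,\overline{\partial}_A\alpha}_{\R}=0$ on a dense open set, hence everywhere by continuity, and the formal final step finishes the proof.
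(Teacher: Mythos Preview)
Your proposal reaches the right conclusion and shares the paper's skeleton---start from $d\bigl(\tfrac{|\alpha|^2}{2}\bigr)=d(\mu_1\circ u)$, take $(0,1)$-parts, and finish with the $\pair{\overline{\partial}_A\alpha\wedge\overline{\partial}_A\alpha}$ argument---but the middle step you flag as the ``main obstacle'' is not one. The paper's proof is a single line there: since $d(\mu_1\circ u)=d\mu_1(u)\bigl(D_{\ms{A}}u\bigr)$, its $(0,1)$-part is $d\mu_1(u)\bigl(\overline{\partial}_{\ms{A}}u\bigr)$, and this vanishes immediately because $\overline{\partial}_{\ms{A}}u=0$ is the first equation of \eqref{eq: gen sw on kahler surface}. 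No appeal to $\mu_c\circ u=0$, no local model, no Swann--bundle equivariant embedding into $\HH\otimes_{\C}\C^m$ is required.

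The reason your analogy with Lemma~\ref{lem: sw with mult. spinor implies hol. spinor} misleads you is that in the multi-spinor case the ``holomorphicity'' equation is the Dirac equation $\sum_i\overline{\partial}_{A\otimes B}f_i+\overline{\partial}^{\ast}_{A\otimes B}g_i=0$, which couples the $f$- and $g$-components; one then genuinely needs $\sum_i\pair{f_i,g_i}=0$ to isolate $\pair{f,\overline{\partial}f}$ from the mixed term $\pair{f,\overline{\partial}^{\ast}g}$. In the Swann-bundle setting, by contrast, the first GSW equation already asserts that the full $(0,1)$-part of $D_{\ms{A}}u$ vanishes, so there is no complementary term to kill. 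Your proposed route via local models would work, but it is a detour around an open door.
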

\begin{proof}
Observe that
\eqst{
d\left(\frac{\abs{\alpha}^2}{2}\right) = d (\mf{a}(u)) = D_{\ms{A}} (\mu_1\circ u) = d\mu_1(u) (D_{\ms{A}}u).
}
We can split the 1-forms on left-hand side and right-hand side into its $(0,1)$ and $(1,0)$-components
and equate them to get
\eqst{
\pair{\alpha, \overline{\partial}_{A}\alpha} \,=\,
 d\mu_1(u) (\overline{\partial}_{\ms{A}}u) \,=\, 0\, .
}
Then, arguing as before, we have $\abs{\overline{\partial}_{A}\alpha}^2 \,=\, 0$. The statement of the theorem follows. 
Therefore, $(\phi, A)$ is a solution to SW monopole equations \eqref{eq: sw on kahler surface}.  
\end{proof}

\subsection{From SW to GSW}

It is possible to prove the converse of Theorem \ref{thm: gen. sw implies hol. spinor} and 
Lemma \ref{lem: sw with mult. spinor implies hol. spinor}. In other words, starting with a 
solution to vortex equations, it is possible to construct a solution to generalised equations.

Assume that $t \,>\, \frac{4\pi}{vol(X)} \deg_{\omega_X} P_0$, so that there exists a pair
$(\phi,\, A)$ satisfying \eqref{eq: sw on kahler surface}. Fix $M$ to be either $\HH^n$ or
$\swann$ for some $N$. The given condition then implies that there must exist a generalised spinor $u$, such that 
\begin{equation}\label{eq: inf. moment map gsw}
\Lambda_{\omega_X} F_A + \imag \mu_1\circ u + \imag t \,=\, 0\, .
\end{equation}

To show that $(u,\, A)$ is a solution to \eqref{eq: gen sw on kahler surface}, we must show that 
$\overline{\partial}_A u = 0$. From \eqref{eq: sw on kahler surface}, we know that
$\overline{\partial}_A \alpha = 0$. Since $(u, A)$ satisfies \eqref{eq: inf. moment map gsw} 
we also know that $\mu_1\circ u = \frac{\abs{\alpha}^2}{2}$. In particular,
we have $d(\mu_1\circ u) \,=\, 
d\left(\frac{\abs{\alpha}^2}{2}\right)$. Equating the $(1,0)$ parts on both sides we get that
$d\mu_1 \left(\overline{\partial}_A u \right) \,=\, 0$. If $\overline{\partial}_A u$ is not 
identically zero, then $\overline{\partial}_A u (p) \in \ker d\mu_1(u(p)) \subset T_{u(p)} M$ for 
every $p \in T_p P_0$, which in turn implies that $\mu_1\circ u (p) = 0$ for every $p \in 
P_0$. In particular we have $\mu_1\circ u = 0$. But this is a contradiction since $\alpha \neq 0$.
It must therefore be the case that $\overline{\partial}_A u = 0$. In conclusion, $(u, A)$ is a 
solution to \eqref{eq: gen sw on kahler surface}.

\section{Maps between moduli spaces}

In both the cases discussed above, over a K\"ahler surface, we get an explicit description of the map from the moduli space of solutions to the generalised equations to that of the usual SW monopole equations. More precisely,
\al{
\label{eq: map of moduli spaces}
& \Pi: \mathfrak{M}(g_X, M) \longrightarrow \Mm^{SW}(g_X), ~~ [(u,A)] \longmapsto [(\phi, A)], ~ \text{where} ~ \mu_1\circ u = \frac{\abs{\alpha}^2}{2}.
}
The fiber of the map is the set of all solutions $u$, which are holomorphic with respect to $\ms{A}$ and $\mu_1\circ u = \frac{\abs{\alpha}^2}{2}$. Since the solutions to SW are in one-to-one correspondence with effective divisors, $\Pi$ maps a solution to the generalised SW equations to an effective divisor $D$, given by the zeroes of the function $\mf{a}(u)$. 

A more general version of the correspondence \eqref{eq: map of moduli spaces} was studied by the second author in \cite{varun17}. Namely, the following theorem was proved:
\begin{thm}
\label{thm: corr. gen. sw and sw}
On a compact Riemannian manifold $X$, suppose that there exists a solution of the GSW equations. The composition $\mu\circ u$ defines a self-dual 2-form on $X$, which we denote by $\Omega$. Then, away from the set of degenerate points of $\Omega$, the equations \eqref{eq: gen sw on kahler surface} can be expressed as a second order PDE in terms of $\Omega$:
\begin{multline}
\label{eq: full ODE of Donaldson}
\nabla^{\ast}\nabla \Omega = - \left(\frac{s}{2} + \abs{\Omega}^2\right) \Omega - 2 \langle d\Omega + \ast d\abs{\Omega}, N_{\Omega} \rangle + \frac{1}{2} \left(\frac{|d\Omega|^2}{\abs{\Omega}^2} - |N_{\Omega}|^2 \right) \Omega \\
+ \frac{1}{2}\left(|d\abs{\Omega}|^2 + 2\langle d\abs{\Omega}, \ast d\Omega \rangle\right) \frac{\Omega}{\abs{\Omega}^2}
\end{multline}
\end{thm}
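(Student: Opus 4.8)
The plan is to derive \eqref{eq: full ODE of Donaldson} by starting from the full (unreduced) GSW monopole equations, which comprise the non-linear Dirac equation $\overline{\partial}_{\ms{A}}u \oplus \partial^{\ast}_{\ms{A}} u = 0$ (equivalently $D_{\ms{A}}u$ has vanishing "Dirac part") together with the curvature equation $F_A^{+} = \mu\circ u$, where the right side is the self-dual $2$-form $\Omega := \mu\circ u$ obtained from the hyperK\"ahler moment map. The key observation, going back to Pidstrygach and exploited in \cite{varun17}, is that away from the degenerate locus of $\Omega$ (where $\abs{\Omega}$ is smooth and nonzero, so $\Omega/\abs{\Omega}$ pointwise selects a complex structure and hence a K\"ahler structure on the fibre of $\ms{M}$), the equivariant map $u$ into the Swann bundle is essentially determined by $\Omega$ through the moment-map relation and the homothety. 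So the strategy is: (i) express $u$ and $D_{\ms{A}}u$ infinitesimally in terms of $\Omega$ and $d\Omega$ using the hyperK\"ahler moment map identities; (ii) apply the covariant derivative to the curvature equation and use the second Bianchi identity $d_A F_A^{+} + d_A F_A^{-} = 0$ together with the Weitzenb\"ock formula for self-dual forms; (iii) substitute the Dirac equation to eliminate all explicit reference to $u$, leaving only $\Omega$, $d\Omega$, $\abs{\Omega}$, the scalar curvature $s$, and the Nijenhuis-type term $N_{\Omega}$.

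Concretely, I would first differentiate the moment map relation: from $\pair{\mu, \xi\otimes\eta} = -\tfrac12 g_M(K^M_\xi, K^M_\eta)$ (the Swann-bundle formula \eqref{eq: moment map on swann bundle}) and the homothety generated by $\euler$, one gets that $d(\mu\circ u) = d\mu(u)(D_{\ms{A}}u)$ decomposes into a "radial" part controlled by $\abs{\Omega}$ and a "horizontal" part controlled by $d\Omega$; pairing with the Dirac equation $\overline{\partial}_{\ms{A}}u = 0$ forces the radial and rotational components of $D_{\ms{A}}u$ to be tied to $d\abs{\Omega}$ and $\ast d\Omega$ respectively — this is exactly the source of the $\tfrac{|d\Omega|^2}{\abs{\Omega}^2}$, $|N_\Omega|^2$, and $\langle d\abs{\Omega}, \ast d\Omega\rangle$ terms in \eqref{eq: full ODE of Donaldson}. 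Next I would compute $\nabla^{\ast}\nabla\Omega$ by applying $d_A^{\ast}$ and $d_A$ to $F_A^{+} = \Omega$: the Weitzenb\"ock formula for a self-dual $2$-form on a four-manifold, $\nabla^{\ast}\nabla\Omega = \Delta\Omega - \tfrac{s}{3}\Omega + \ldots$ (Bochner--Weitzenb\"ock, noting $W^{-}$ acts trivially on $\Omega^{+}$), converts the rough Laplacian into a Hodge Laplacian plus curvature, and the Bianchi identity plus the non-linear Dirac equation converts the Hodge Laplacian of $F_A^{+}$ into terms quadratic in $\Omega$ and linear in $d\Omega$; the quadratic self-interaction $\abs{\Omega}^2\Omega$ arises from the identity $F_A^{+}\cdot F_A^{+}$ acting back on $\Omega$ via Clifford multiplication.

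The main obstacle — and the place where the hyperK\"ahler geometry does real work — is step (i): identifying the term $N_\Omega$ and showing that the non-linear Dirac equation, after eliminating $u$, contributes precisely $-2\langle d\Omega + \ast d\abs{\Omega}, N_\Omega\rangle$ and not some uncontrolled function of $u$. The point is that $N_\Omega$ must be expressible purely in terms of $\Omega$: it is the obstruction to $\Omega/\abs{\Omega}$ being a covariantly constant (integrable, parallel) section of the $2$-sphere bundle of self-dual forms, i.e. a "generalised Nijenhuis tensor" measuring the failure of the pointwise-selected complex structure to be K\"ahler for the Levi-Civita connection. One has to show $d_A(\Omega/\abs{\Omega})$, suitably decomposed, depends only on $\Omega$ and $d\Omega$ — this uses that $u$ lands in the Swann bundle, where the fibrewise $\mr{Sp}(1)$-action is transitive on the sphere of complex structures, so the "angular" derivative of $u$ is pinned down by the angular derivative of $\Omega$. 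Once $N_\Omega$ is shown to be intrinsic to $\Omega$, the remaining computation is a bookkeeping exercise: collect the Weitzenb\"ock curvature term $-\tfrac{s}{2}\Omega$ (absorbing the constant into the convention for the rescaled curvature), the self-interaction $-\abs{\Omega}^2\Omega$, the cross term $-2\langle d\Omega + \ast d\abs{\Omega}, N_\Omega\rangle$, and the two quadratic-gradient terms with their coefficients $\tfrac12$, matching \eqref{eq: full ODE of Donaldson} term by term. I expect the sign and normalisation tracking (especially the factor relating $\mu_1$, $\abs{\alpha}^2/2$, and $\abs{\Omega}$, and the convention $\Delta_X = -2\imag\partial\overline{\partial}$) to be the most error-prone part, but it is routine once the structural identities are in place.
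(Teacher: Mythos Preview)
The paper does not actually prove this theorem: it is quoted from \cite{varun17} (``the following theorem was proved'') and no argument is given in the text. There is therefore no in-paper proof to compare your proposal against.

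That said, your outline matches the strategy one would expect from the Donaldson reference \cite{donaldson} that the paper explicitly invokes: differentiate the curvature equation $F_A^+ = \Omega$, apply the Weitzenb\"ock identity for self-dual two-forms, and use the non-linear Dirac equation together with the Swann-bundle moment-map formula \eqref{eq: moment map on swann bundle} to eliminate $u$. Your identification of the crux --- showing that $N_\Omega$ and the angular part of $D_{\ms{A}}u$ are intrinsic to $\Omega$ away from the degenerate set --- is the right place to focus. Two cautions: first, the theorem as stated is for a compact Riemannian four-manifold, so you should work with the full non-linear Dirac operator rather than slipping into the K\"ahler decomposition $\overline{\partial}_{\ms{A}}u = 0$ midway (you do both in your sketch); second, your handling of the scalar-curvature coefficient (``absorbing the constant into the convention'' to pass from $s/3$ to $s/2$) is not a convention issue but comes from the precise form of the Weitzenb\"ock formula on $\Lambda^+$ combined with the quadratic term from the curvature equation, and this needs to be tracked honestly rather than waved away.
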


This is a generalisation of Donaldson's result \cite{donaldson}, who showed that the solutions to the usual SW equations are in one-to-one correspondence with self-dual 2-forms satisfying \eqref{eq: full ODE of Donaldson}, away from the singular set. It follows that there is a map between the moduli spaces of solutions to the GSW and SW equations. On a K\"ahler surface, the equation \eqref{eq: full ODE of Donaldson} reduces to a PDE for a single function, which is reminiscent of the formulation of two-dimensional vortex equations by Jaffe and Taubes \cite{jaffe-taubes82}. In the K\"ahler situation, \eqref{eq: full ODE of Donaldson} gives an alternate description of the map $\Pi$.

%\vspace{0.2cm} 
\bibliographystyle{ieeetr}

\end{document}